\documentclass[runningheads,a4paper,envcountsame]{llncs}

\usepackage{graphicx}
\usepackage[utf8]{inputenc}
\usepackage[T1]{fontenc}
\usepackage{color}
\usepackage{amsmath}
\usepackage{amsfonts, amssymb, dsfont}
\usepackage{todonotes}
\usepackage{algorithm}
\usepackage{algpseudocode}
\usepackage{float}
\usepackage{wasysym}
\usepackage{blkarray,booktabs,bigstrut}
\usepackage{verbatim}
\usepackage{tabularx}
\usepackage{environ}
\usepackage{hyperref}
\usepackage{cleveref}
\usepackage{upgreek}
\usepackage{thmtools, thm-restate}
\usepackage[nocompress]{cite}

\DeclareMathOperator{\Fact}{Fact}

\DeclareMathOperator{\Pref}{Pref}

\DeclareMathOperator{\Suff}{Suff}

\DeclareMathOperator{\var}{var}


\def\N{\mathbb{N}}
\def\ta{\mathtt{a}}
\def\tb{\mathtt{b}}

\newcommand{\al}{\operatorname{alph}}

\begin{document}

\title{The Equivalence Problem of E-Pattern Languages with Regular Constraints is Undecidable}

\titlerunning{On Pattern Languages with Regular Constraints}

\author{Dirk Nowotka, Max Wiedenhöft\thanks{This work was supported by the DFG project number 437493335.}}

\authorrunning{D. Nowotka, M. Wiedenhöft}

\institute{{Department of Computer Science, Kiel University, Germany}
\email{\{dn,maw\}@informatik.uni-kiel.de}}

\maketitle             

\begin{abstract}
Patterns are words with terminals and variables. The language of a pattern is
the set of words obtained by uniformly substituting all variables with words that contain only terminals. 
Regular constraints restrict valid substitutions of variables by
associating with each variable a regular language representable by, e.g., finite automata. 
Pattern languages with regular constraints contain only words in which each variable is substituted 
according to a set of regular constraints.
We consider the membership, inclusion, and equivalence problems for erasing
and non-erasing pattern languages with regular constraints.
Our main result shows that the erasing equivalence problem—one of the most prominent open problems 
in the realm of patterns—becomes undecidable if regular constraints are allowed in addition to variable equality.

\keywords{Patterns, Pattern Languages, Regular Constraints, Undecidability, Automata, Membership, Inclusion, Equivalence}
\end{abstract}

\section{Introduction}
A \emph{pattern} is a finite word consisting of symbols from a finite set of letters $\Sigma = \{a_1, ... , a_\sigma\}$, also called terminals, and
from an infinite set of variables $X = \{x_1, x_2, ...\}$ with $\Sigma \cap X = \emptyset$. It is a natural and compact device to define formal languages.
Words consisting of only terminal symbols are obtained from patterns by a \emph{substitution} $h$, a terminal preserving morphism which maps all variables 
from a pattern to words over the terminal alphabet. The \emph{language} of a pattern consists of all words obtainable from that pattern by substitutions.

\vspace{2mm}

We differentiate between two kinds of substitutions. Originally, pattern languages introduced by Angluin \cite{DBLP:journals/jcss/Angluin80}
only consisted of words obtained by \emph{non-erasing substitutions} that required all variables to be mapped to non-empty words. Thus, those
languages are also called \emph{NE-pattern languages}. Later, so called \emph{erasing-}/\emph{extended-} or just \emph{E-pattern languages} have been 
introduced by Shinohara \cite{Shinohara1983}. In these, substitutions are also allowed to map variables to the empty word. 
Consider, for example, the pattern $\alpha := x_1\ta\tb x_2x_2$. Then, by mapping $x_1$ to $\ta\ta\ta$ and $x_2$ to $\tb\ta$ with a substitution $h$,
we obtain the word $h(\alpha) = \ta\ta\ta\ta\tb\tb\ta\tb\ta$. If we consider the E-pattern language of $\alpha$, we could also map $x_2$ to the empty
word $\varepsilon$ with a substitution $h'$ which also maps $x_1$ to $\ta\ta\ta$ and obtain $h'(\alpha) = \ta\ta\ta\ta\tb$.

\newpage

Due to its practical and simple definition, patterns and their corresponding languages occur in numerous areas regarding computer science and 
discrete mathematics, including unavoidable patterns \cite{Jiang1994,lothaire1997}, algorithmic learning theory \cite{DBLP:journals/jcss/Angluin80,FERNAU201844,Shinohara1995},
word equations \cite{lothaire1997}, theory of extended regular expressions with back references \cite{FREYDENBERGER20191},
and database theory \cite{FreydenbergerP21,SchmidSchweikardtPODS2022}.

The main problems regarding patterns and pattern languages are the \emph{membership problem} (and its variations \cite{gawrychowski_et_al:LIPIcs.MFCS.2021.48,Manea2022, Fleischmann2023}), the \emph{inclusion problem}, and the \emph{equivalence problem} in both the erasing (E) and non-erasing (NE) cases.
The membership problem determines if a word belongs to a pattern's language. This problem is NP-complete for both E- and NE-pattern languages \cite{DBLP:journals/jcss/Angluin80,Jiang1994}.
The inclusion problem asks if one pattern's language is included in another's. Jiang et al. \cite{Jiang1995} showed that it is generally undecidable for E- and NE-pattern languages. 
Freydenberger and Reidenbach \cite{Freydenberger2010}, and Bremer and Freydenberger \cite{BREMER201215} proved its undecidability for all alphabets with size $\geq 2$ in both E- and NE-pattern languages.
The equivalence problem tests if two patterns generate the same language. 
It is trivially decidable for NE-pattern languages \cite{DBLP:journals/jcss/Angluin80}.
Whether its decidable for E-pattern languages is one of the major open problems in the field \cite{Jiang1995,Reidenbach2004-1,Reidenbach2004-2,Ohlebusch1996,Reidenbach2007}. 
However, for terminal-free patterns, the inclusion and equivalence problems in E-pattern languages have been characterized and shown to be NP-complete \cite{Jiang1995,DBLP:journals/ipl/EhrenfeuchtR79a}. 
The decidability of the inclusion problem for terminal-free NE-pattern languages remains unresolved, though.

Various extensions to patterns and pattern languages have been introduced over time. Some examples are the  
bounded scope coincidence degree, patterns with bounded treewidth, $k$-local patterns, and strongly-nested patterns (see \cite{Day2018} and references therein). 
Koshiba \cite{Koshiba1995} introduced
so called \emph{typed patterns} to enhance the expressiveness of pattern languages by restricting substitutions of variables to types,
i.e., arbitrary recursive languages. This has recently been extended by Geilke and Zilles \cite{Geilke2011} who introduced the notion of \emph{relational patterns} and \emph{relational pattern languages}.

We consider a specific class of typed- or relational patterns called \emph{patterns with regular constraints}. 
Let $\mathcal{L}_{Reg}$ be the set of all regular languages.
Then, we say that a mapping $r : X \rightarrow \mathcal{L}_{Reg}$ is a \emph{regular constraint} that implicitly defines \emph{languages on variables}
$x\in X$ by $L_{r}(x) = r(x)$. Let $\mathcal{C}_{Reg}$ be the \emph{set of all regular constraints}.
A \emph{patterns with regular constraints} $(\alpha,r_\alpha)\in (\Sigma\cup X)^* \times \mathcal{C}_{Reg}$ is a pattern which is associated with
a regular constraint. A substitution $h$ is \emph{$r_\alpha$-valid} if all variables are substituted according to $r_\alpha$.
The language of $(\alpha,r_{\alpha})$ is defined analogously to pattern languages with the additional requirement that all substitutions
must be $r_\alpha$-valid.

This paper examines erasing (E) and non-erasing (NE) pattern languages with regular constraints. 
The membership problem for both is NP-complete, 
while the inclusion problem is undecidable for the general and terminal-free versions.
This immediately follows from known results.
The main finding of this paper is that the equivalence problem for erasing pattern 
languages with regular constraints is indeed undecidable.

\section{Preliminaries}
Let $\N$ denote the natural numbers $\{1, 2, 3, \dots\}$
and let $\N_0 := \N \cup \{0\}$.
For $n,m \in \mathbb{N}$ set $[m,n] := \{k \in \mathbb{N} \mid m \leq k \leq n\}$. 
Denote $[n] := [1,n]$ and $[n]_0 := [0,n]$.
The powerset of any set $A$ is denoted by $\mathcal{P}(A)$. 
An \emph{alphabet} $\Sigma$ is a non-empty finite set whose elements are called \emph{letters}.  
A \emph{word} is a finite sequence of letters from $\Sigma$. 
Let $\Sigma^*$ be the set of all finite words over $\Sigma$, thus it is a free monoid with concatenation as operation and the empty 
word $\varepsilon$ as natural element. Set $\Sigma^+ := \Sigma^* \setminus \{\varepsilon\}$.
We call the number of letters in a word $w \in \Sigma^*$ \emph{length} of $w$, denoted by $|w|$.
Therefore, we have $|\varepsilon| = 0$.
If $w = xyz$ for some $x,y,z\in\Sigma^*$, we call $x$ a \emph{prefix} of $w$, $y$ a \emph{factor} of $w$,
and $z$ a \emph{suffix} of $w$ and denote the sets of all prefixes, factors, and suffixes of $w$ by $\Pref(w)$, $\Fact(w)$,
and $\Suff(w)$ respectively.
For words $w,u\in\Sigma^*$, let $|w|_u$ denote the number of distinct occurrences of $u$ in $w$ as a factor.
Denote $\Sigma^k := \{w \in \Sigma^* \mid |w| = k\}$.
For $w \in \Sigma^*$, let $w[i]$ denote $w$'s $i^{th}$ letter for all $i \in [\vert w \vert]$. 
For reasons of compactness, we denote $w[i] \cdots w[j]$ by $w[i \cdots j]$ for all $i,j \in [\vert w \vert]$ with $i < j$.
Set $\al(w) :=  \{\ta \in \Sigma \mid \exists i \in [\vert w \vert] : w[i] = \ta\}$ as $w$’s alphabet.

Let $X$ be a countable set of variables such that $\Sigma \cap X = \emptyset$.
A \emph{pattern} is then a non-empty, finite word over $\Sigma \cup X$.
The set of all patterns over $\Sigma \cup X$ is denoted by $Pat_\Sigma$. 
For example, $x_1 \ta x_2 \tb \ta x_2 x_3$ is a pattern over $\Sigma = \{\ta,\tb\}$ with $x_1,x_2,x_3\in X$.
For a pattern $p\in Pat_\Sigma$, let $\var(p) := \{\ x \in X\ |\ |p|_x \geq 1\ \}$ denote the set of variables occurring in $p$.
A \emph{substitution of $p$} is a morphism $h : (\Sigma \cup X)^* \to \Sigma^*$ such that $h(\ta) = \ta$ for all $\ta \in \Sigma$ and 
$h(x) \in \Sigma^*$ for all $x \in X$. 
If we have $h(x) \neq \varepsilon$ for all $x \in \var(p)$, we call $h$ a \emph{non-erasing substitution} for $p$. 
Otherwise $h$ is an \emph{erasing substitution} for $p$. The set of all substitutions w.r.t.~$\Sigma$ is denoted by $H_\Sigma$.
If $\Sigma$ is clear from the context, we may write just $H$.
Given a pattern $\alpha\in Pat_\Sigma$, it's erasing pattern language $L_E(\alpha)$ and its non-erasing pattern language $L_{NE}(\alpha)$
are defined respectively by
\begin{align*}
	L_{E}(\alpha) &:= \{\ h(\alpha)\ |\ h\in H, h(x) \in \Sigma^* \text{ for all } x\in\var(\alpha)\}, \text{ and } \\
	L_{NE}(\alpha) &:= \{\ h(\alpha)\ |\ h\in H, h(x) \in \Sigma^+ \text{ for all } x\in\var(\alpha)\}.
\end{align*}

Let $\mathcal{L}_{Reg}$ be the set of all regular languages. We call a mapping $r : X \rightarrow \mathcal{L}_{Reg}$
a \emph{regular constraint} on $X$. If not stated otherwise, we always have $r(x) = \Sigma^*$.
We denote the \emph{set of all regular constraints} by $\mathcal{C}_{Reg}$.
For some $r\in\mathcal{C}_{Reg}$ we define the \emph{language of a variable} $x\in X$ by $L_r(x) = r(x)$.
If $r$ is clear by the context, we omit it and just write $L(x)$.
A \emph{pattern with regular constraints} is a pair $(p,r_p) \in Pat_\Sigma\times\mathcal{C}_{Reg}$.
We denote the \emph{set of all patterns with regular constraints} by $Pat_{\Sigma,\mathcal{C}_{Reg}}$.
For some $(p,r_p)\in Pat_{\Sigma,\mathcal{C}_{Reg}}$ and $h\in H$, we say that $h$ is a \emph{$r_p$-valid substitution}
if $h(x) \in L(x)$ for all $x\in\var(p)$.
We extend the notion of pattern languages by the following. For any $(p,r_p)\in Pat_{\Sigma,\mathcal{C}_{Reg}}$ we denote by 
    $$L_E(p,r_p) := \{\ h(p)\ |\ h\in H, h(x)\in\Sigma^*\text{ for all } x\in\var(p), h \text{ is } r_p \text{-valid}\ \}$$
the \emph{erasing pattern language with regular constraints} of $(p,r_p)$ and by
    $$L_{NE}(p,r_p) := \{\ h(p)\ |\ h\in H, h(x)\in\Sigma^+\text{ for all } x\in\var(p), h \text{ is } r_p \text{-valid}\ \}$$
the \emph{non-erasing pattern language with regular constraints} of $(p,r_p)$.

\subsection{Nondeterministic 2-Counter Automata}

Usually, 2-counter automata are defined over input words utilising an input alphabet and the additional
use of two counters. In our setting, we consider a slight variation which assumes that the automaton always runs over an
empty input word.
A \emph{nondeterministic 2-counter automaton without input} (see e.g. \cite{Iberra1978}) is a 4-tuple $A = (Q,\delta,q_0,F)$ which consists of a set of states $Q$,
a transition function $\delta : Q \times \{0,1\}^2 \rightarrow \mathcal{P}(Q \times \{\-1,0,+1\}^2)$, an initial state $q_0\in Q$, and
a set of accepting states $F \subseteq Q$. A \emph{configuration} of $A$ is defined as a triple $(q,m_1,m_2)\in Q\times\N_0\times\N_0$
in which $q$ indicates the current state and $m_1$ and $m_2$ indicate the contents of the first and second counter.
We define the relation $\vdash_A$ on $Q\times\N_0\times\N_0$ by $\delta$ as follows. For two configurations
$(p,m_1,m_2)$ and $(q,n_1,n_2)$ we say that $(p,m_1,m_2) \vdash_A (q,n_1,n_2)$ if and only if there exist $c_1,c_2\in\{0,1\}$ and
$r_1,r_2\in\{-1,0,+1\}$ such that
\begin{enumerate}
	\item if $m_i = 0$ then $c_i = 0$, otherwise if $m_i > 0$, then $c_i = 1$, for $i\in\{1,2\}$,
	\item $n_i = m_i + r_i$ for $i\in\{1,2\}$,
	\item $(q,r_1,r_2)\in \delta(p,c_1,c_2)$, and
	\item we assume if $c_i = 0$ then $r_i \neq -1$ for $i\in\{1,2\}$.
\end{enumerate}
Essentially, the machine checks in every state whether the counters equal $0$ and then changes the value of each counter by at most one per transition
before entering a new state. A \emph{computation} is a sequence of configurations. An \emph{accepting computation} of $A$ is a
sequence $C_1,...,C_n\in (Q\times\N_0\times\N_0)^n$ with $C_1 = (q_0,0,0)$, $C_i \vdash_A C_{i+1}$ for all $i\in\{1,...,n-1\}$, and $C_n\in F\times\N_0\times\N_0$ for some $n\in\N$.

We \emph{encode} configurations of $A$ by assuming $Q = \{q_0,...,q_e\}$ for some $e\in\N_0$ and defining a function 
$enc$ $:$ $Q\times\N_0\times\N_0 \rightarrow \{0,\#\}^*$ by 
$$enc(q_i,m_1,m_2) := 0^{1+i}\#0^{5+2m_1}\#0^{5+2m_2}.$$
Notice that each state $q_i$ is mapped to a word $0^{i+1}$ and that each number $m_i$ is
mapped to an odd number $0^{5+2m_i}$ where $0^5$ denotes $0$, $0^7$ denotes $1$, $0^9$ denotes $2$ and so on.
This is extended to encodings of computations by defining for every $n\geq 1$ and every sequence $C_1,...,C_n\in Q\times\N_0\times\N_0$
$$ enc(C_1,...,C_n) := \#\#\ enc(C_1)\ \#\#\ ...\ \#\#\ enc(C_n)\ \#\#. $$
This encoding of configurations and computations is specifically chosen for its utility in proving Theorem \ref{theorem:pattern-regular-constraints-erasing-equiv-undecidable}.

Furthermore, define the set of accepting computations $$\mathtt{ValC}(A) := \{enc(C_1,...,C_n)\ |\ C_1,...,C_n \text{ is an accepting computation of } A\}$$
and let $\mathtt{InvalC}(A) = \{0,\#\}^*\setminus \mathtt{ValC}(A)$.
The emptiness problem for deterministic 2-counter-automata with input is undecidable (cf. e.g. \cite{Iberra1978,Minsky1961}), thus it is also undecidable
whether a nondeterministic 2-counter automaton without input has an accepting computation \cite{Freydenberger2010, Jiang1995}.

\subsection{Known Results}

The membership problems of both, the erasing and non-erasing pattern languages, have been shown to be NP-complete \cite{DBLP:journals/jcss/Angluin80,Jiang1994}.
Hence, we observe the following for patterns with regular constraints.

\begin{corollary}
	Let $(\alpha,r_\alpha) \in Pat_{\Sigma,\mathcal{C}_{Reg}}$ and $w\in\Sigma^*$.
	The decision problem of whether $w \in L_X(\alpha,r_\alpha)$ for $X \in \{E,NE\}$ is NP-complete.
\end{corollary}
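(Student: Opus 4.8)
The plan is to establish NP-completeness by proving membership in NP and NP-hardness separately, reducing both directions to the already-known NP-completeness of the membership problem for ordinary (constraint-free) pattern languages.

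For the upper bound (membership in NP), I would design a nondeterministic polynomial-time algorithm that, on input $(\alpha, r_\alpha)$ and a word $w \in \Sigma^*$, guesses a factorization of $w$ witnessing a valid substitution. Concretely, the algorithm nondeterministically guesses, for each variable $x \in \var(\alpha)$, the image $h(x)$ as a factor of $w$; since $|h(\alpha)| = |w|$ and each variable image is bounded by $|w|$, the total guessed data is polynomial in $|w|$. Verifying that concatenating these images according to $\alpha$ yields exactly $w$ is a routine polynomial-time check, and this part is precisely the standard membership argument for $L_X$ without constraints. The only genuinely new ingredient is checking $r_\alpha$-validity: for each variable $x$ one must test whether the guessed image $h(x)$ lies in the regular language $r_\alpha(x)$. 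Since each $r_\alpha(x)$ is regular, it is recognized by some finite automaton whose size is part of the input description, so membership $h(x) \in r_\alpha(x)$ is decidable in time polynomial in $|h(x)|$ and the automaton size. For $X = NE$ one additionally verifies $h(x) \neq \varepsilon$, which is immediate. Hence the whole verification runs in nondeterministic polynomial time.

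For the lower bound (NP-hardness), I would reduce from the constraint-free membership problem, which is known to be NP-complete for both $L_E$ and $L_{NE}$. Given an instance $(\alpha, w)$ of ordinary pattern membership, map it to $(\alpha, r_\alpha)$ where $r_\alpha$ is the trivial constraint $r_\alpha(x) = \Sigma^*$ for every variable $x$. Since every substitution is vacuously $r_\alpha$-valid under this constraint, we have $L_X(\alpha, r_\alpha) = L_X(\alpha)$, so $w \in L_X(\alpha, r_\alpha)$ if and only if $w \in L_X(\alpha)$. This reduction is trivially computable in polynomial time (one need only attach the constant automaton accepting $\Sigma^*$ to each variable), which transfers NP-hardness to the constrained setting.

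I expect no step here to present a serious obstacle, since both directions essentially inherit the classical analysis; the statement is labelled a \emph{corollary} precisely because it follows by combining the known NP-completeness of constraint-free membership with the observation that regular-membership checks add only polynomial overhead. The one point requiring mild care is making the input encoding of the regular constraints explicit—one must fix that each $r_\alpha(x)$ is presented by a finite automaton (or equivalent), so that its size counts toward the input length and the validity checks are genuinely polynomial; this is what guarantees the upper bound remains in NP rather than merely being nondeterministically decidable.
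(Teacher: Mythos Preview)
Your proposal is correct and matches the paper's own argument: NP-hardness is inherited from the constraint-free membership problem via the trivial constraint $r_\alpha(x)=\Sigma^*$, and NP-containment follows because a witnessing substitution has total length at most $|w|$ and can be verified in polynomial time. Your version is in fact slightly more careful than the paper's, since you make explicit the polynomial-time check of $h(x)\in r_\alpha(x)$ via the automaton representation, a point the paper leaves implicit.
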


Indeed, we immediately obtain NP-hardness in both cases by the previous results shown in \cite{DBLP:journals/jcss/Angluin80,Jiang1994} for patterns.
NP-containment follows by knowing that a valid certificate results in a substitution of $\alpha$ which has at most length $|w|$.

One other notable problem regarding patterns is the inclusion problem.
The undecidabilities of the inclusion problems for patterns in the erasing and non-erasing cases
have been initially shown by Jiang et al. \cite{Jiang1995} for unbounded alphabets and have been refined and extended to finite alphabets of sizes greater or equal to $2$ 
in \cite{Freydenberger2010, BREMER201215}. Hence we have the following.

\begin{theorem}\label{theorem:pattern-inclusion-undecidability}\cite{Jiang1995, Freydenberger2010, BREMER201215}
	Let $\alpha,\beta\in Pat_\Sigma$.
	In general, for all alphabets $\Sigma$ with $|\Sigma| \geq 2$, it is undecidable to answer whether
    \begin{enumerate}
        \item $L_E(\alpha) \subseteq L_E(\beta)$, or
        \item $L_{NE}(\alpha) \subseteq L_{NE}(\beta)$.
    \end{enumerate}
\end{theorem}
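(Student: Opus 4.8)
\noindent\emph{Proof idea (sketch).} The plan is to reduce from the emptiness problem for nondeterministic $2$-counter automata without input, which the excerpt records as undecidable: given such an $A$, one cannot decide whether $\mathtt{ValC}(A) \neq \emptyset$. From $A$ I would effectively construct two patterns $\alpha,\beta \in Pat_\Sigma$ over the binary alphabet $\Sigma = \{0,\#\}$ with
$$ L_E(\alpha) \subseteq L_E(\beta) \iff \mathtt{ValC}(A) = \emptyset, $$
and an analogous statement in the non-erasing case. Undecidability of the right-hand side then yields undecidability of both inclusion problems, and since the encoding alphabet already has size $2$ this covers every $\Sigma$ with $|\Sigma| \geq 2$.

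For the left pattern I would take $\alpha := x_1$, so that $L_E(x_1) = \Sigma^*$ (and $L_{NE}(x_1) = \Sigma^+$ in the non-erasing case). All the work goes into the right pattern: the goal is to engineer $\beta$ so that $L_E(\beta) = \mathtt{InvalC}(A) = \Sigma^* \setminus \mathtt{ValC}(A)$. Granting this, $L_E(\alpha) = \Sigma^* \subseteq L_E(\beta)$ holds exactly when every word is an invalid computation, i.e. when $\mathtt{ValC}(A) = \emptyset$, which is the required equivalence.

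To build $\beta$ I would exploit the rigid shape of the encoding $enc$ and split invalidity into finitely many local conditions: a word lies outside $\mathtt{ValC}(A)$ if it (i) breaks the block format prescribed by the single- and double-$\#$ separators or contains an even-length counter block $0^{5+2m}$, (ii) opens with a configuration other than $(q_0,0,0)$, (iii) closes with a non-accepting configuration, or (iv) contains consecutive configurations $C_i,C_{i+1}$ with $C_i \not\vdash_A C_{i+1}$. Conditions (i)--(iii) are positional checks near the endpoints or inside a single block, while (iv) is existential over an error position together with a finite choice of faulty transition rule. The purpose of the specially chosen encoding---binary alphabet, odd-length counter blocks, doubled separators---is precisely to make each of these conditions expressible through fixed terminal anchors while variables absorb the surrounding context.

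The main obstacle I anticipate is turning this disjunction of local conditions into a \emph{single} pattern, since E-pattern languages are not closed under union and so the naive $\beta_1 \cup \cdots \cup \beta_k$ is unavailable. The idea I would pursue is to interleave context-absorbing variables with fixed terminal templates in one pattern $\beta$ and let the erasing substitutions do the selecting: setting to $\varepsilon$ the variables belonging to all but one template collapses $\beta$ onto the skeleton of a single error type, while a dedicated scanning variable slides that skeleton to an arbitrary position, witnessing an error anywhere in the word. Proving that this single pattern generates \emph{exactly} $\mathtt{InvalC}(A)$---soundness, that every word it produces is genuinely invalid, and completeness, that every invalid word is produced---over an alphabet of only two letters is the technical heart of the matter, and is precisely what is carried out in \cite{Jiang1995, Freydenberger2010, BREMER201215}. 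The non-erasing case needs a companion construction, since variables can no longer be deleted to switch off the inactive templates; the same sources supply it.
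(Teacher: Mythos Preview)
The paper does not supply its own proof of this theorem: it is quoted as a known result from \cite{Jiang1995, Freydenberger2010, BREMER201215} and then used directly to derive Corollary~\ref{corollary:pattern-regular-constraints-inclusion-undecidability}. Your sketch is a faithful high-level outline of the strategy in those references---reduce from emptiness of nondeterministic $2$-counter automata without input, encode computations over $\{0,\#\}$, and engineer $\beta$ so that erasing substitutions can select among finitely many local ``error templates'' whose union covers $\mathtt{InvalC}(A)$. Indeed, the paper's own main result (Theorem~\ref{theorem:pattern-regular-constraints-erasing-equiv-undecidable}) is an explicit adaptation of this machinery, crediting \cite{Freydenberger2010}; there the regular constraints take over much of the gatekeeping that terminal anchors must perform in the unconstrained setting, and the predicates $\upgamma_i$ play exactly the role of your error templates.

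Two small points where your sketch is looser than the cited constructions. First, the claim that a size-$2$ construction ``covers every $\Sigma$ with $|\Sigma|\ge 2$'' is not automatic: enlarging $\Sigma$ changes both $L_E(\alpha)$ and $L_E(\beta)$, so the cited papers argue undecidability for each fixed alphabet size separately (typically by ensuring extra letters cannot help $\beta$ generate an encoding of a valid computation). Second, in the actual constructions $\alpha$ is not the bare variable $x_1$, and $L_E(\beta)$ is not literally $\mathtt{InvalC}(A)$; both patterns carry a common wrapper so that words outside a restricted format lie in both languages for trivial reasons, and only inside that format does $\beta$ detect invalidity---compare the paper's use of the frame $x_v\,\alpha_1\,x_v\,\tilde{y}$ and the markers $v=0\#^30$ in its proof of Theorem~\ref{theorem:pattern-regular-constraints-erasing-equiv-undecidable}. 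Neither point undermines your outline; they are precisely the technical refinements you defer to the references.
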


From that, we immediately obtain the following for patterns with regular constraints.

\begin{corollary}\label{corollary:pattern-regular-constraints-inclusion-undecidability}
	Let $(\alpha,r_\alpha),(\beta,r_\beta)\in Pat_{\Sigma,\mathcal{C}_{Reg}}$.
	In both, the terminal-free and the non terminal-free cases for $\alpha$ and $\beta$ we have in general,
	for all alphabets $\Sigma$ with $|\Sigma| \geq 2$, that it is undecidable to answer whether
	\begin{enumerate}
        \item $L_E(\alpha,r_\alpha) \subseteq L_E(\beta,r_\beta)$, or
        \item $L_{NE}(\alpha,r_\alpha) \subseteq L_{NE}(\beta,r_\beta)$.
    \end{enumerate}
\end{corollary}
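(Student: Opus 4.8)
The plan is to reduce the already-known undecidable inclusion problems for ordinary patterns (Theorem~\ref{theorem:pattern-inclusion-undecidability}) to the constrained setting, handling the two cases of the statement separately. The guiding observation is that ordinary patterns are a special case of patterns with regular constraints, and that a single terminal letter can be faithfully simulated by a variable whose constraint is the corresponding singleton language.

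For the non-terminal-free case I would note that an ordinary pattern is exactly a pattern carrying the \emph{trivial} constraint. Given $\alpha,\beta\in Pat_\Sigma$, define $r\in\mathcal{C}_{Reg}$ by $r(x)=\Sigma^*$ for every $x\in X$. Directly from the definitions we then get $L_X(\alpha,r)=L_X(\alpha)$ and $L_X(\beta,r)=L_X(\beta)$ for $X\in\{E,NE\}$, so $(\alpha,r)$ and $(\beta,r)$ form a constrained inclusion instance whose answer coincides with that of the ordinary instance $(\alpha,\beta)$. Since the latter is undecidable for every alphabet with $|\Sigma|\ge 2$, so is the constrained instance.

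For the terminal-free case the idea is to remove all genuine terminals while preserving the language by pinning freshly introduced variables to single letters. Fix $\alpha,\beta\in Pat_\Sigma$ and, for each $\ta\in\Sigma$, choose a fresh variable $x_\ta\notin\var(\alpha)\cup\var(\beta)$. Let $\alpha',\beta'$ be obtained from $\alpha,\beta$ by replacing every occurrence of each terminal $\ta$ by the variable $x_\ta$; both resulting patterns are terminal-free. Define $r$ by $r(x_\ta):=\{\ta\}$ for each $\ta\in\Sigma$ and $r(x):=\Sigma^*$ on all remaining variables. The crucial point is that the singleton $\{\ta\}$ is regular and excludes $\varepsilon$, so even in the erasing case an $r$-valid substitution is \emph{forced} to map $x_\ta$ to $\ta$ and can never erase it. The key step is then to verify $L_X(\alpha',r)=L_X(\alpha)$ and $L_X(\beta',r)=L_X(\beta)$ for both $X\in\{E,NE\}$, which I would establish through a bijection between substitutions: every $r$-valid $h$ satisfies $h(x_\ta)=\ta$ for all $\ta$, so it induces a substitution $g$ on $\var(\alpha)$ with $h(\alpha')=g(\alpha)$, since reinserting the forced letters undoes the replacement; conversely each erasing (resp.\ non-erasing) substitution $g$ of $\alpha$ extends to an $r$-valid $h$ with the same image. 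Hence $L_X(\alpha',r)\subseteq L_X(\beta',r)$ holds exactly when $L_X(\alpha)\subseteq L_X(\beta)$, and undecidability transfers.

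The main obstacle is mild and lies entirely in the language-preservation equality: one must reuse a single fresh variable $x_\ta$ consistently for all occurrences of $\ta$ across \emph{both} patterns, so that the constraint forces the same letter uniformly and the induced map between substitutions is well defined, and one must treat the non-erasability of constrained variables carefully in the E-case so that the simulated terminals behave exactly like genuine terminals. Once this equality is checked, both reductions are immediate and the corollary follows.
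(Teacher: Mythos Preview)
Your proposal is correct and follows essentially the same approach as the paper: the general case is handled by observing that ordinary patterns are the special case with trivial constraints, and the terminal-free case by replacing each terminal $\ta$ with a fresh variable $x_\ta$ constrained to the singleton $\{\ta\}$. You supply more detail than the paper (which dispatches the argument in two sentences), in particular making explicit why the singleton constraint forces non-erasure in the E-case and why the induced bijection between substitutions preserves the language, but the underlying reduction is identical.
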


Indeed, the general results follow immediately from Theorem \ref{theorem:pattern-inclusion-undecidability}.
Additionally, in the terminal-free cases, we can reduce the general versions to the terminal free versions by substituting
each terminal letter $\ta\in\Sigma$ which occurs in a pattern $\alpha$ by a new variable $x_\ta$
and setting $L(x_\ta) = \{\ta\}$. This results in effectively the same problem instances
without using terminals in the pattern words.

\section{Undecidability of E-Pattern Language Equivalence}
The main result of this paper considers the equivalence problem for erasing pattern languages with regular constraints.
In particular, we show that this problem is undecidable. 

\begin{theorem}\label{theorem:pattern-regular-constraints-erasing-equiv-undecidable}
	Let $(\alpha,r_\alpha),(\beta,r_\beta)\in Pat_{\Sigma,\mathcal{C}_{Reg}}$.
	In general, it is undecidable to decide whether $L_{E}(\alpha,r_\alpha) = L_{E}(\beta,r_\beta)$
	for all alphabets $\Sigma$ with $\Sigma \geq 2$.
\end{theorem}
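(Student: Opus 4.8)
The plan is to reduce from the halting (emptiness) problem for nondeterministic 2-counter automata without input, using the encoding $enc$ and the undecidability fact stated at the end of the 2-counter section. The high-level strategy exploits a classical trick: the set $\mathtt{ValC}(A)$ of valid computation encodings is ``co-pattern-definable'' in the sense that its complement $\mathtt{InvalC}(A) = \{0,\#\}^*\setminus\mathtt{ValC}(A)$, together with the full language $\{0,\#\}^*$, can be described by E-pattern languages with regular constraints. Concretely, I would fix $\Sigma=\{0,\#\}$, let $(\alpha,r_\alpha)$ be a pattern whose language is all of $\Sigma^*$ (for instance a single variable $x$ with $r_\alpha(x)=\Sigma^*$), and build a pattern (or finite family of patterns combined into one via fresh separating structure and constraints) $(\beta,r_\beta)$ whose language is exactly $\mathtt{InvalC}(A)$. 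Then $L_E(\alpha,r_\alpha)=L_E(\beta,r_\beta)$ holds if and only if $\mathtt{InvalC}(A)=\Sigma^*$, i.e. if and only if $\mathtt{ValC}(A)=\emptyset$, i.e. if and only if $A$ has no accepting computation. Since the latter is undecidable, equivalence is undecidable.

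The technical heart is showing that $\mathtt{InvalC}(A)$ is an E-pattern language with regular constraints. A word in $\Sigma^*$ fails to encode an accepting computation for at least one of finitely many syntactic reasons, and each reason should be captured by one pattern together with regular constraints on its variables. First I would enumerate the ``local'' failure modes: the word does not match the gross format $\#\#\,(\,0^+\#0^+\#0^+\,)\#\#\,(\dots)\#\#$ (a regular condition, so handled by constraints on a single erasing variable), or the first configuration is not $enc(q_0,0,0)$, or the final configuration's state is not in $F$, or some counter/state block has the wrong parity or is too short. These are all regular and hence captured directly by regular constraints. The delicate failure mode is the local consistency of consecutive configurations: there exist adjacent configurations $C_i, C_{i+1}$ with $C_i\not\vdash_A C_{i+1}$. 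Here I would use a pattern of the shape $y\,\#\#\,u\,\#\#\,v\,\#\#\,z$, where $y$ and $z$ are erasing variables absorbing the surrounding prefix and suffix (constrained to well-formed block sequences), and $u,v$ are the two offending adjacent configuration encodings whose illegality w.r.t.\ $\vdash_A$ is enforced through regular constraints and controlled length relations between shared variables.

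The main obstacle, and the reason the encoding $enc$ was ``specifically chosen,'' is expressing the arithmetic of $\vdash_A$ — namely $n_i = m_i + r_i$ with $r_i\in\{-1,0,+1\}$ — using only variable \emph{equality} (repeated variables) plus \emph{regular} constraints, since regular constraints alone cannot compare the lengths of two independent counter blocks. The trick is that a correct transition changes each counter by at most one, so between $u=enc(q_i,m_1,m_2)$ and $v=enc(q_j,n_1,n_2)$ the blocks $0^{5+2m_1}$ and $0^{5+2n_1}$ differ in length by at most $2$; I would force equality of the shared bulk of the two counter blocks by reusing a common variable (e.g.\ writing the first counter block of $u$ as $0^{5}w_1$ and of $v$ as $0^{5}w_1 s_1$ with $s_1$ a short, regularly constrained ``delta'' factor), so that the repeated variable $w_1$ pins $m_1$ and $n_1$ together up to the regular slack, and the state/delta combination $(q_i\to q_j, r_1, r_2)$ being disallowed by $\delta$ is a finite, hence regular, condition. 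Verifying that this construction captures \emph{exactly} the invalid words — that every genuinely invalid computation is produced by some failure pattern, and no valid computation is ever produced — is the bookkeeping-heavy step, but it reduces to checking each failure mode against the definition of $\vdash_A$ and the structure of $enc$, after which the equivalence $L_E(\alpha,r_\alpha)=L_E(\beta,r_\beta)\iff\mathtt{ValC}(A)=\emptyset$ yields undecidability for every $\Sigma$ with $|\Sigma|\geq 2$.
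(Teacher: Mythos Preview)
Your high-level plan (reduce from emptiness of nondeterministic 2-counter automata, enumerate finitely many syntactic failure modes, use repeated variables to pin two consecutive counter blocks together up to a bounded delta) is the same as the paper's. The genuine gap is the step you wave past with ``combined into one via fresh separating structure and constraints''. If you simply concatenate the failure-mode sub-patterns $\gamma_1\cdots\gamma_k$ (each erasable to $\varepsilon$), the resulting language is the \emph{product} $L_E(\gamma_1)\cdots L_E(\gamma_k)$, not their union, and products of words in $\mathtt{InvalC}(A)$ need not lie in $\mathtt{InvalC}(A)$ (a malformed prefix concatenated with a suitable suffix can yield a perfectly valid computation encoding). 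So $L_E(\beta,r_\beta)=\mathtt{InvalC}(A)$ does not come for free, and your proposal gives no mechanism to enforce that at most one failure sub-pattern is ``active''.

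The paper solves exactly this problem, and does so by \emph{not} aiming for $\mathtt{InvalC}(A)$ as a pattern language. Instead it introduces a delimiter $v=0\#^30$ containing $\#^3$, sets $\alpha=x_v\,\alpha_1\,x_v\,\tilde y$ and $\beta=x_1\upgamma_1x_1\cdots x_\mu\upgamma_\mu x_\mu\,\tilde z$ with $L(x_v)=L(x_i)=\{\varepsilon,v\}$ and $L(\tilde y)=L(\tilde z)$ equal to the complement of $\{v\,0u0\,v: |u|_{\#^3}=0\}$, and guarantees that every predicate output $h(\upgamma_i)$ is either $\varepsilon$ or some $0u0$ with $|u|_{\#^3}=0$. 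Because $\#^3$ can then only come from some $x_i$, any substitution of $\beta$ of the shape $v\,0u0\,v$ must have exactly one $x_i=v$ and hence $0u0=h(\upgamma_i)$; this is what turns the concatenation into an honest disjunction over the predicates (Lemma~\ref{lemma:predicate-necesssary-substitutions}). Both $\alpha$ and $\beta$ generate all ``non-wrapped'' words via $\tilde y,\tilde z$, and they differ precisely on wrapped words $v\,0u0\,v$ with $u\in\mathtt{ValC}(A)$. Your target equality $L_E(\beta,r_\beta)=\mathtt{InvalC}(A)$ is replaced by the weaker but sufficient statement that $L_E(\alpha,r_\alpha)\setminus L_E(\beta,r_\beta)=\{\,v\,0u0\,v : u\in\mathtt{ValC}(A)\,\}$.

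A second point you only assert is that partially erased failure predicates cannot accidentally output a valid computation. The paper's encoding $0^{5+2m}$ (always odd, always $\geq 5$) is chosen precisely so that when some $y_j$ in a predicate is erased, the surviving counter block has even length or wrong boundary letters, hence falls outside the ``good structure'' language $L_G$ and is caught by predicate (1). Your sketch of the illegal-transition pattern would need an analogous parity/boundary argument, which is where the particular form of $enc$ earns its keep.
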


The rest of this section is dedicated to show Theorem \ref{theorem:pattern-regular-constraints-erasing-equiv-undecidable}.
Roughly based on the idea of the proof of undecidability of the inclusion
problem for pattern languages in the case of finite alphabets, given by Freydenerger and Reidenbach \cite{Freydenberger2010},
we reduce the question whether some non-deterministic 2-counter automaton without input $A$ has some accepting computation
to the problem of whether the erasing pattern languages of two patterns with regular constraints are equal.
The first is known to be undecidable out of which the undecidability of the second problem follows.
In contrast to the proof given in \cite{Freydenberger2010}, the constructed patterns and predicates (to be explained later) had to 
be notably adapted to work for the case considered here.

Let $A =  (Q,\delta,q_0,F)$ be some non-deterministic 2-counter automaton without input.
We construct two patterns with regular constraints $(\alpha,r_\alpha),(\beta,r_\beta)\in Pat_{\Sigma,\mathcal{C}_{Reg}}$
such that $L_E(\alpha,r_\alpha) = L_E(\beta,r_\beta)$ if and only if $\mathtt{ValC}(A) = \emptyset$.

We start with the binary case and assume $\Sigma = \{0,\#\}$.
First, we construct $(\alpha,r_\alpha)$. We set the pattern $\alpha$ to
$$ \alpha = x_v\ \alpha_1\ x_v\ \tilde{y}$$
for variables $x_v, \tilde{y}, \alpha_1$. Let $v = 0\#^30$.
We then define the regular constraint $r_\alpha$ for $\alpha$ by
$L_E(x_v) := \{\varepsilon, v\}$,
$L_E(\alpha_1) := \{\ 0w0 \in \Sigma^*\ |\ w\in\Sigma^* \text{ and } |w|_{\#^3} = 0\} \cup \{\varepsilon\}$, and
$L_E(\tilde{y}) := \{\ w\in\Sigma^*\ |\ w \neq vuv \text{ for all } u\in\Sigma^* \text{ with } u\in L_E(\alpha_1)\setminus\{\varepsilon\}\ \}$.
Notice that the given regular constraints won't allow $\tilde{y}$ to be substituted to anything
we can obtain with $h(x_v\alpha_1 x_v)$ in the case of $x_v$ and $\alpha_1$ not being substituted by the empty word, but may be substituted to everything else.
Next, we construct $(\beta,r_{\beta})$. We set the pattern $\beta$ to
$$ \beta = \hat{\beta}_1\ ...\ \hat{\beta}_{\mu}\ \tilde{z} $$
such that $\tilde{z}$ is a new variable and 
$\hat{\beta}_1, ... ,\hat{\beta}_\mu$ are terminal free patterns defined by
$\hat{\beta}_i = x_i\ \upgamma_i\ x_i$ for new variables $x_i$ and some later specified terminal free pattern $\upgamma_i\in X^*$ for all $i\in[\mu]$.
We assume that each variable in $\var(\upgamma_i)$ only appears in $\upgamma_i$
and define $r_{\upgamma_i}$ as the set of regular constraints on the variables occurring in $\upgamma_i$.
By the construction that follows we assume that for all $x\in\var(\upgamma_i)$ we always have $\varepsilon \in L(x)$.
Notice that each $x_i$ occurs $2$ times in $\beta$ for all $i\in[\mu]$ and also notice that for all $x\in\var(\beta)$ we have
that $\varepsilon\in L(x)$.
We define the regular constraints $r_\beta$ on $\beta$ by setting
$L(x_i) := \{\varepsilon, v\}$ for all $i\in[\mu]$ and
$L(\tilde{z}) := L(\tilde{y})$.
Additionally we add all regular constraints defined by $r_{\upgamma_i}$ to $r_\beta$ for all $i\in[\mu]$.
Further, we from now on assume that for all $w\in L_E(\upgamma_i, r_{\upgamma_i})$ we have either $w = \varepsilon$ or
$w = 0u0$ for $u\in\Sigma^*$ with $|u|_{\#^3} = 0$.
This assumption holds by the construction that follows.

Using the construction up to this point and the assumptions we made so far, we first show
the following property.

\begin{lemma}\label{lemma:all-words-beta-in-alpha}
    We have $L_E(\beta,r_\beta) \subseteq L_E(\alpha,r_\alpha)$.
\end{lemma}
\begin{proof}
    Let $w\in L_E(\beta, r_\beta)$.
    Then, there exists some $r_\beta$-valid $h\in H$ such that $h(\beta) = w$.
    We differentiate between two main cases.

    For the first case, assume $h(\beta) = h(\hat{\beta}_1 ... \hat{\beta}_\mu\tilde{z}) = v0u0v$ for some $u\in\Sigma^*$ with $|u|_{\#^3} = 0$.
    By that, we know that $h(\beta) \notin L(\tilde{y})$ as $0u0\in L(\alpha_1)\setminus\{\varepsilon\}$.
    Let $h'\in H$ be some substitution.
    Set $h'(x_v) = v$, $h'(\alpha_1) = 0u0$, and $h'(\tilde{y}) = \varepsilon$.
    We have that $h'$ is $r_\alpha$-valid.
    We get $h'(\alpha) = h'(x_v\alpha x_v\tilde{y}) = v0u0v$.
    So, $h(\beta) = h'(\alpha)$ and by that $h(\beta)\in L_E(\alpha,r_\alpha)$.

    In the second case, assume $h(\beta) \neq v0u0v$ for any $u\in\Sigma^*$ with $|u|_{\#^3} = 0$.
    Then $h(\beta) \in L(\tilde{y})$.
    Let $h'\in H$ such that $h'(\tilde{y}) = h(\beta)$ and $h'(x_v) = h'(\alpha_1) = \varepsilon$.
    Then $h'$ is $r_\alpha$-valid and we get $h'(\alpha) = h(\beta)$. By that, $h(\beta)\in L_E(\alpha,r_\alpha)$ which concludes this lemma.
    \qed
\end{proof}

So by now we know that all words in the language of the pattern $(\beta,r_\beta)$ are also in the language generated by the
pattern $(\alpha,r_\alpha)$. Next, we show the rather immediate result that all words in the language of $(\alpha,r_\alpha)$ 
that do not follow a specific form are also in the language generated by $(\beta,r_\beta)$. This fact is important for the construction
that follows.

\begin{lemma}\label{lemma:trivial-words-alpha-in-beta}
    Let $h\in H$ such that $h(\alpha)\in L_E(\alpha,r_\alpha)$.
    If $h(\alpha) \neq v0u0v$ for all $u\in\Sigma^*$ with $|u|_{\#^3} = 0$, 
    then $h(\alpha) \in L_E(\beta,r_\beta)$.
\end{lemma}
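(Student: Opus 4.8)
The plan is to exhibit, for the word $w := h(\alpha)$ given by the hypothesis, an explicit $r_\beta$-valid substitution $h'$ of $\beta$ with $h'(\beta) = w$. The structural fact I will exploit is that every variable occurring in $\hat{\beta}_1\cdots\hat{\beta}_\mu$ can be erased: by construction $\varepsilon\in L(x_i) = \{\varepsilon, v\}$ for each $i\in[\mu]$, and by assumption $\varepsilon\in L(x)$ for every $x\in\var(\upgamma_i)$. Hence the whole prefix $\hat{\beta}_1\cdots\hat{\beta}_\mu$ collapses to $\varepsilon$ under the substitution sending all these variables to $\varepsilon$, leaving only $\tilde{z}$ to carry the entirety of $w$.

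The single point I must verify is that loading $w$ into $\tilde{z}$ is legal, i.e.\ that $w\in L(\tilde{z}) = L(\tilde{y})$. Here I would merely unfold the definition of $L(\tilde{y})$: its complement is exactly $\{vuv : u\in L_E(\alpha_1)\setminus\{\varepsilon\}\}$, and since $L_E(\alpha_1)\setminus\{\varepsilon\} = \{0u0 : u\in\Sigma^*,\ |u|_{\#^3}=0\}$, this complement equals precisely $\{v0u0v : u\in\Sigma^*,\ |u|_{\#^3}=0\}$. The hypothesis of the lemma says exactly that $w$ avoids this set, so $w\in L(\tilde{y}) = L(\tilde{z})$.

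With this in hand I would define $h'\in H$ by $h'(x_i) = \varepsilon$ for all $i\in[\mu]$, by $h'(x) = \varepsilon$ for all $x\in\var(\upgamma_i)$ and all $i\in[\mu]$, and by $h'(\tilde{z}) = w$. Since $\varepsilon$ lies in the constraint language of every erased variable and $w\in L(\tilde{z})$, the substitution $h'$ is $r_\beta$-valid. Then $h'(\hat{\beta}_i) = h'(x_i)\,h'(\upgamma_i)\,h'(x_i) = \varepsilon$ for each $i$, whence $h'(\beta) = h'(\hat{\beta}_1)\cdots h'(\hat{\beta}_\mu)\,h'(\tilde{z}) = w$, giving $w\in L_E(\beta, r_\beta)$ as required.

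There is no genuine obstacle here: the content of the lemma is the bookkeeping identity of the second paragraph, namely recognizing that the ``bad'' form $v0u0v$ forbidden in the hypothesis coincides exactly with the words excluded from $L(\tilde{y})$. Once that coincidence is observed, the conclusion is immediate, because $\beta$ was deliberately engineered so that its $\tilde{z}$-tail inherits precisely the constraint of $\tilde{y}$ while every other variable may vanish.
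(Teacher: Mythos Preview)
Your proof is correct and follows essentially the same approach as the paper: erase every variable of $\beta$ except $\tilde{z}$, set $h'(\tilde{z}) = h(\alpha)$, and observe that the hypothesis is exactly the statement that $h(\alpha)\in L(\tilde{z}) = L(\tilde{y})$. Your version is in fact slightly more explicit than the paper's, since you spell out why the complement of $L(\tilde{y})$ coincides with the set $\{v0u0v : |u|_{\#^3}=0\}$ that the hypothesis excludes.
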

\begin{proof}
    Select $h'\in H$ such that $h'(\tilde{z}) = h(\alpha)$ and $h'(x) = \varepsilon$ for all other $x\in\var(\beta)$ with $x\neq\tilde{z}$.
    By assumption, we know that $h(\alpha) \in L(\tilde{z})$ and that $\varepsilon \in L(x)$ for all other $x\in\var(\beta)$ with $x\neq\tilde{z}$.
    Hence, $h'$ is $r_\beta$-valid. We get $h'(\beta) = h(\alpha)$, thus we have $h(\alpha)\in L_E(\beta,r_\beta)$. This concludes this lemma.
    \qed
\end{proof}

Finally, we show that substitutions of $(\alpha,r_\alpha)$ that follow that specific form can only
be obtained from $(\beta,r_\alpha)$ if and only if there exists some $\hat{\beta}_i$ for which we
have $h(\alpha) = h'(\hat{\beta_i})$.

\begin{lemma}\label{lemma:predicate-necesssary-substitutions}
    Let $h\in H$ such that $h(\alpha) \in L_E(\alpha,r_\alpha)$.
    If $h(\alpha) = v0u0v$
    for some $u\in\Sigma^*$ with $|u|_{\#^3} = 0$,
    then $h(\alpha) \in L_E(\beta,r_\beta)$ if and only if there exists some $i\in[\mu]$
    and $r_\beta$-valid $h'\in H$ with $h'(\hat{\beta}_i) = h(\alpha)$.
\end{lemma}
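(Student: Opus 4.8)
The plan is to prove the two implications separately, with the reverse direction being routine and the forward direction carrying all the weight.

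For the reverse direction, suppose there are $i\in[\mu]$ and an $r_\beta$-valid $h'$ with $h'(\hat{\beta}_i)=h(\alpha)$. I would build a single $r_\beta$-valid substitution $g$ that agrees with $h'$ on $x_i$ and on $\var(\upgamma_i)$, and sends every other variable of $\beta$—all $x_j$ and $\var(\upgamma_j)$ for $j\neq i$, together with $\tilde{z}$—to $\varepsilon$. This is $r_\beta$-valid because $\varepsilon\in L(x)$ holds for every variable of $\beta$ by construction and because $\varepsilon\in L(\tilde{z})=L(\tilde{y})$ (indeed $\varepsilon\neq vuv$ for any nonempty $u$). Then $g(\hat{\beta}_j)=\varepsilon$ for $j\neq i$ and $g(\tilde{z})=\varepsilon$, so $g(\beta)=h'(\hat{\beta}_i)=h(\alpha)$, giving $h(\alpha)\in L_E(\beta,r_\beta)$.

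For the forward direction, write $W:=h(\alpha)=v\,0u0\,v$ with $|u|_{\#^3}=0$ and fix an $r_\beta$-valid $g$ with $g(\beta)=W$. The whole argument is organised around counting occurrences of the factor $\#^3$. First, $|W|_{\#^3}=2$: each flanking $v=0\#^30$ contributes one occurrence, the middle $0u0$ contributes none (as $|u|_{\#^3}=0$ and bordering by $0$ creates none), and neither internal junction can host a $\#^3$ since both read $\ldots0\cdot0\ldots$. Second, every nonempty block image $g(\hat{\beta}_j)=g(x_j)\,g(\upgamma_j)\,g(x_j)$ has $g(x_j)\in\{\varepsilon,v\}$ and $g(\upgamma_j)\in\{\varepsilon\}\cup\{0w0:|w|_{\#^3}=0\}$; hence it begins and ends with $0$, and it contributes $0$ occurrences of $\#^3$ if $g(x_j)=\varepsilon$ and exactly $2$ if $g(x_j)=v$. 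Now I would locate the part of $\beta$ producing $W$. If all blocks satisfied $g(\hat{\beta}_j)=\varepsilon$, then $W=g(\tilde{z})\in L(\tilde{z})=L(\tilde{y})$; but $W=v(0u0)v$ with $0u0\in L(\alpha_1)\setminus\{\varepsilon\}$ is exactly the shape excluded from $L(\tilde{y})$, a contradiction. So let $\hat{\beta}_{i_0}$ be the first block with nonempty image $B$. Since $B$ is a prefix of $W$ ending in $0$, and the only prefixes of $W$ of length $\geq 2$ ending in $0$ have length $\geq 5$ (the shorter candidates $0\#,0\#\#,0\#\#\#$ all end in $\#$), $B$ begins with $v$; consequently $g(x_{i_0})=v$ and $B$ already accounts for both occurrences of $\#^3$ in $W$. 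Writing $W=BS$ and using that $B$ ends in $0$ (so no $\#^3$ straddles the cut), we get $|S|_{\#^3}=0$; thus the last $\#^3$ of $W$ lies inside the final $v$ of $B$. As this same occurrence also lies in the final $v$ of $W$, matching its position forces $|B|=|W|$, i.e.\ $B=W$. Hence $g(\hat{\beta}_{i_0})=W=h(\alpha)$, and taking $i:=i_0$ and $h':=g$ finishes the direction.

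The main obstacle is precisely this last rigidity step: converting the global count $|W|_{\#^3}=2$ into the positional statement $B=W$. It hinges on two delicate structural facts—that admissible nonempty block images are $0$-bordered, so that no spurious $\#^3$ is manufactured at any junction, and that the middle factor $0u0$ is $\#^3$-free—which together pin both copies of $\#^3$ inside one block and force its right end to coincide with that of $W$. I expect the $\tilde{z}$-constraint to be used exactly once, to rule out the degenerate all-empty-blocks case.
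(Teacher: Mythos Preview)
Your proposal is correct and follows essentially the same strategy as the paper's proof: both directions are handled the same way, and the forward direction hinges on the count $|W|_{\#^3}=2$ together with the fact that all non-$\tilde z$ pieces are $0$-bordered, forcing the two occurrences of $\#^3$ to come from the two copies of $v$ in a single block $\hat\beta_i$. Your write-up is in fact more explicit than the paper's at the one delicate step---ruling out the case where all $x_j$ are erased by analysing $0$-ending prefixes of $W$---which the paper leaves implicit in the jump from ``$\#^3\notin\Fact(h'(\upgamma_1\cdots\upgamma_\mu))$'' to ``some $h'(x_i)\neq\varepsilon$''.
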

\begin{proof}
    Let $h\in H$ be given as in the claim.
    So, we have $h(\alpha) = h(x_v\alpha_1x_v\tilde{y}) = v0u0v$ for some $u\in\Sigma^*$ with $|u|_{\#^3} = 0$.
    For the first direction assume $h(\alpha) \in L_E(\beta,r_\beta)$.
    Then, there exists some $r_\beta$-valid substitution $h'\in H$ such that $h'(\beta) = h(\alpha) = v0u0v$.
    By construction, we know that $h(\alpha) \notin L(\tilde{z})$.
    Also, we know that for all $i\in[\mu]$ and for all $w\in L_E(\upgamma_i,r_{\upgamma_i})$ we either have $w = \varepsilon$
    or $w = 0u0$ for some $u\in\Sigma^*$ with $|u|_{\#^3} = 0$.
    Hence, we have $\#^3\notin\Fact(h'(\upgamma_1...\upgamma_\mu))$.
    So, there exists some $i\in[\mu]$ such that $h'(x_i) \neq \varepsilon$ and by that $h'(x_i) = v$.
    As $|v0u0v|_{\#^3} = 2$ and $|\beta|_{x_i} = 2$, we immediately get that $h'(x_i\upgamma_ix_i) = v h'(\upgamma_i) v = v0u0v$
    and by that $h'(\upgamma_i) = w'$. In particular, for all other $x\in\var(\beta)$ with $x\neq x_i$ and $x\notin\var(\upgamma_i)$
    we have $h'(x) = \varepsilon$. This concludes this direction.
    The other direction immediately follows by the assumption and by setting all variables $x\in\var(\beta)$ with $x\notin\var(\hat{\beta}_i)$
    to $x = \varepsilon$. Then $h'(\hat{\beta}_i) = h(\alpha)$ and by that $h(\alpha)\in L_E(\beta,r_{\beta})$.
    \qed
\end{proof}

Now, we know that if $h(\alpha)$ has some precise form, that $h(\alpha) \in L_E(\beta,r_\beta)$
if and only if there exists some $\hat{\beta}_i$ which we can use to obtain that specific $h(\alpha)$.
By that, we obtain the following for all words which are not in both languages.

\begin{corollary}
    For some $r_\alpha$-valid substitution $h\in H$ we have $h(\alpha) \notin L_E(\beta,r_\beta)$
    if and only if $h(\alpha) = v\ w'\ v$ for some $w'\in\Sigma^*$ with 
    $w' \in \{\ 0u0\ |\ u\in\Sigma^*, |u|_{\#^3} = 0\}$
    and for all $i\in[\mu]$ we have
    $w' \notin L_E(\upgamma_i,r_{\upgamma_i})$.
\end{corollary}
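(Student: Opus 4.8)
The plan is to read this statement as a purely logical consolidation of Lemmas~\ref{lemma:all-words-beta-in-alpha}, \ref{lemma:trivial-words-alpha-in-beta}, and~\ref{lemma:predicate-necesssary-substitutions}, which together already classify the membership of $h(\alpha)$ in $L_E(\beta,r_\beta)$. Since $h$ is $r_\alpha$-valid we have $h(\alpha)\in L_E(\alpha,r_\alpha)$ by definition, so all three lemmas apply to $h(\alpha)$. I would first record that whenever $h(\alpha)$ starts and ends with $v$, the factorization $h(\alpha)=v\,w'\,v$ determines $w'$ uniquely by cancellation in the free monoid $\Sigma^*$; hence the quantifier ``for some $w'$'' in the statement refers to a single canonical word, and the conditions $w'\in\{0u0\mid u\in\Sigma^*,\ |u|_{\#^3}=0\}$ and $w'\notin L_E(\upgamma_i,r_{\upgamma_i})$ for all $i\in[\mu]$ are unambiguous properties of that word. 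I would then prove the two implications separately.

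For the ``if'' direction, assume $h(\alpha)=v\,w'\,v$ with $w'=0u0$, $|u|_{\#^3}=0$, and $w'\notin L_E(\upgamma_i,r_{\upgamma_i})$ for every $i\in[\mu]$. Since $h(\alpha)=v0u0v$ has exactly the form required by Lemma~\ref{lemma:predicate-necesssary-substitutions}, membership $h(\alpha)\in L_E(\beta,r_\beta)$ is equivalent to the existence of some $i\in[\mu]$ and an $r_\beta$-valid $h'$ with $h'(\hat{\beta}_i)=h'(x_i\upgamma_i x_i)=h(\alpha)$. I would rule this out by the $\#^3$-counting argument used inside that lemma: as every word of $L_E(\upgamma_i,r_{\upgamma_i})$ is either $\varepsilon$ or free of the factor $\#^3$, while $|v0u0v|_{\#^3}=2$, the two occurrences of $\#^3$ must both stem from the flanking copies of $x_i$, forcing $h'(x_i)=v$ and hence $h'(\upgamma_i)=w'$. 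But $h'(\upgamma_i)\in L_E(\upgamma_i,r_{\upgamma_i})$ then contradicts $w'\notin L_E(\upgamma_i,r_{\upgamma_i})$. Thus no such $i,h'$ exist, so $h(\alpha)\notin L_E(\beta,r_\beta)$.

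For the ``only if'' direction I would argue contrapositively: assuming the right-hand condition fails, I derive $h(\alpha)\in L_E(\beta,r_\beta)$. If $h(\alpha)\neq v0u0v$ for all $u$ with $|u|_{\#^3}=0$, then Lemma~\ref{lemma:trivial-words-alpha-in-beta} gives $h(\alpha)\in L_E(\beta,r_\beta)$ directly. Otherwise $h(\alpha)=v\,w'\,v$ with $w'=0u0$ and $|u|_{\#^3}=0$, so the right-hand condition can fail only if $w'\in L_E(\upgamma_j,r_{\upgamma_j})$ for some $j\in[\mu]$. Then I would invoke the reverse direction of Lemma~\ref{lemma:predicate-necesssary-substitutions}: set $h'(x_j)=v$, extend $h'$ on $\var(\upgamma_j)$ by an $r_{\upgamma_j}$-valid assignment witnessing $w'\in L_E(\upgamma_j,r_{\upgamma_j})$, and put $h'(x)=\varepsilon$ on all remaining variables. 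This $h'$ is $r_\beta$-valid and satisfies $h'(\hat{\beta}_j)=v\,w'\,v=h(\alpha)$, whence $h(\alpha)\in L_E(\beta,r_\beta)$. In both subcases the contrapositive is established.

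Because the heavy lifting is already done by the preceding lemmas, I do not expect a genuine obstacle; the only point demanding care is the $\#^3$-bookkeeping in the ``if'' direction, namely showing that $h'(x_i\upgamma_i x_i)=v0u0v$ admits no freedom other than $h'(x_i)=v$. Concretely I would verify that $|v0u0v|_{\#^3}=2$: each flanking $v=0\#^30$ contributes one $\#^3$, and neither the internal $0u0$ (since $|u|_{\#^3}=0$) nor the concatenation seams, which only juxtapose $0$'s, create any further occurrence. Hence a copy of $x_i$ set to $\varepsilon$ would push both $\#^3$'s into $h'(\upgamma_i)$, which is impossible. Confirming that this reasoning—reused verbatim from Lemma~\ref{lemma:predicate-necesssary-substitutions}—fits the present statement is essentially the whole of the work.
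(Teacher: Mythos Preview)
Your proposal is correct and follows essentially the same approach as the paper: the paper's proof is a one-liner invoking Lemma~\ref{lemma:trivial-words-alpha-in-beta} and Lemma~\ref{lemma:predicate-necesssary-substitutions}, and your argument is a careful unpacking of exactly that invocation (including re-running the $\#^3$-counting from the proof of Lemma~\ref{lemma:predicate-necesssary-substitutions} to bridge from ``some $h'$ with $h'(\hat{\beta}_i)=h(\alpha)$'' to ``$w'\in L_E(\upgamma_i,r_{\upgamma_i})$''). The mention of Lemma~\ref{lemma:all-words-beta-in-alpha} is superfluous here, but harmless.
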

\begin{proof}
    Immediately follows by Lemma \ref{lemma:predicate-necesssary-substitutions} and Lemma \ref{lemma:trivial-words-alpha-in-beta}
    and the fact that all other words of $L_E(\alpha,r_\alpha)$ are contained in $L_E(\beta,r_\beta)$.
    \qed
\end{proof}

We say that a word $w\in\Sigma^*$ is of \emph{good structure} or a \emph{computation} if
$w \in L_G$ with $L_G = ((\#\#00^*\#0^5(00)^*\#0^5(00)^*)^+\#\#)$. Otherwise, we say that $w$ is of \emph{bad structure}.
Clearly, all encodings of computations of $A$ are words of good structure.

From now on, let $h\in H$ be some $r_\alpha$-valid substitution and assume $h(\alpha) = v0u0v$ for some 
$u\in\Sigma^*$ with $|u|_{\#^3} = 0$ and $v = 0\#^30$ as before.

We now have to construct $\hat{\beta}_1$ to $\hat{\beta}_\mu$ such that if
$u$ is not an encoding of a valid computation of $A$, then we have that there exists some $i\in[\mu]$
with $\hat{\beta}_i = x_i \upgamma_i x_i$ and $w \in L_E(\upgamma_i,r_{\upgamma_i})$.
Once we have that, we know that for any $r_\alpha$-valid $h'\in H$ we have $h'(\alpha)\notin L_E(\beta,r_\beta)$
if and only if $h'(\alpha) = v0w_c0v$ for any $w_c\in\mathtt{ValC}(A)$,
concluding this reduction. 

For all $i\in[\mu]$ we call the pattern with regular constraints $(\upgamma_i,r_{\upgamma_i})$ a predicate.
We construct each predicate independently, hence we omit their specific indexes from now on.
Assume each predicate does not share its index with any other predicate and assume the total number of predicates to be $\mu\in\N$.
As we will see, the total number of predicates is bound by the number of non-final states $|Q \setminus F|$,
the number of invalid transitions not found in $\delta$, and a constant number of predicates considering the basic structure of
encodings of computations.
Notice, that each constructed predicate ensures that for all $r_\beta$-valid $h'\in H$ we have
$h'(\upgamma) = \varepsilon$ or $h'(\upgamma) = 0u'0$ for some $u'\in\Sigma^*$ with $|u'|_{\#^3} = 0$,
which satisfies our initial assumption.

(1) First, we construct a predicate which can be used to obtain all substitutions in which $h(u)$ is not of good structure
and which does not start with an encoding of the initial configuration $(q_0,0,0)$. For that, let $\upgamma = y$
for a new and independent variable $y\in X$ and set 
$$L(y) := \{\varepsilon\} \cup \{\ 0u'0\ |\ u'\in\Sigma^*, |u'|_{\#^3} = 0, u'\in L_{gs}\}$$
for
$$L_{gs} := \overline{L(\ \#\#0\#0^5\#0^5\ (\#\#0^+\#0^5(00)^*\#0^5(00)^*)^*\ \#\#\ )}.$$
Then, if $u$ is not of good structure or does not start with a valid encoding of the initial configuration, we can 
define a $r_\beta$-valid $h'\in H$ such that $h'(\upgamma) = 0u0$.

(2) Next, we construct predicates which can be used to obtain all substitutions which end in an encoding of a configuration
that is not in a final state. So, for all $q_j\in Q\setminus F$ we define a new and independent predicate $\upgamma = y$
for respectively new and independent variables $y \in X$ such that 
$$ L(y) := \{\varepsilon\} \cup \{\ 0u'0\ |\ u'\in\Sigma^*, |u'|_{\#^3} = 0, u'\in \Sigma^*\cdot L(\#\#0^{1+j}\#0^+\#0^+\#\#)\}. $$
Then, if $u$ ends in an encoding of a configuration of $A$ which contains no final state, we can obtain
a $r_\beta$-valid substitution $h'\in H$ such that $h'(\upgamma) = 0u0$.

(3) Now, we have to make sure that in a single step the value of no counter is changed by more than one.
For that, we construct four predicates, each corresponding to the value of either the first or second counter being either increased or decreased
by more than one (in a single step of an encoding of a computation).
First, we construct a new and independent predicate $\upgamma$ which can be used if the first counter is increased by more than one
in a single step. Let $\upgamma = y_1\ x_1\ y_2\ x_1\ y_3$ for new and independent variables $y_1,y_2,y_3,x_1\in X$
and set
\begin{align*}
    L(y_1) &:= \{\varepsilon\} \cup \{\ 0u0\#0\ |\ u\in\Sigma^*, |u|_{\#^3} = 0\}, \\
    L(y_2) &:= \{\varepsilon\} \cup L(0^4\#0^50^*\#\#0^+\#0^4\mathbf{00(00)^+}), \\
    L(y_3) &:= \{\varepsilon\} \cup \{\ 0\#0u0\ |\ u\in\Sigma^*, |u|_{\#^3} = 0\}, \\
    L(x_1) &:= \{00\}^*.
\end{align*}
Then, if $h(u)$ has a factor $\#0^50^m\#0^50^n\#\#0^{1+j}\#0^50^m\mathbf{00(00)^k}\#$ for $m,n,j,k\in\N$ and $k\geq 2$, which corresponds to a part of an encoding of the
first counter being increased by more than one (see bold numbers), we can find a $r_\beta$-valid substitution $h'\in H$ for which we have $h'(\upgamma) = 0u0$.
All other words obtainable from $\upgamma$ are words of bad structure, i.e.,
they are not in $L_G$ if any of the variables $y_1$, $y_2$, or $y_3$ is substituted by the empty word as
$L(y_1)L(x_1)L(x_1) \cap L_G = \emptyset$, $L(x_1)L(y_2)L(x_1) \cap L_G = \emptyset$, $L(x_1)L(x_1)L(y_3) \cap L_G = \emptyset$,
$L(y_1)L(x_1)L(y_2)L(x_1) \cap L_G = \emptyset$, $L(y_1)L(x_1)L(x_1)L(y_3) \cap L_G = \emptyset$, and $L(x_1)L(y_2)L(x_1)L(y_3) \cap L_G = \emptyset$.
Also, we cannot get $|h'(\upgamma)|_{\#^3} > 0$.
The cases of the first counter being decreased by more than one, the second counter being increased by more than one, and the second counter being
decreased by more than one can all the constructed in an analogue manner, hence we omit their specific constructions here.
They only differ in their definition of $L(y_2)$, in particular the placement of either the border $\#\#$ or the position of $00(00)^+$.

By now, only if $u$ corresponds to a word of good structure in which every subsequent pair of encodings of configurations
in which either no counter, one counter, or both counters are increased or decreased by at most one, we cannot find a predicate $\upgamma$ and
a $r_\beta$-valid substitution $h'\in H$ such that $h'(\upgamma) = u$.
That already contains all encodings of valid computations of $A$, however we may still get encodings of computations in which
two subsequent configurations do not correspond to any valid transition. 

(4) So, in a last step, we construct predicates for each
invalid pair of consecutive configurations based on the definition $\delta$ in $A$. 
For all $q_k,q_j\in Q$, $c_1,c_2\in\{0,1\}$, and $r_1,r_2\in\{-1,0,1\}$ with
$(q_k,r_1,r_2)\notin\delta(q_j,c_1,c_2)$ we define a new and independent predicate $\upgamma$ which can be used to obtain encodings of computations in which 
such an (invalid) transition is used. We demonstrate the construction using an examplary case by setting $c_1 = 1$, $c_2 = 1$, $r_1 = +1$,
and $r_2 = 0$. Let 
$$\upgamma = y_1\ x_1\ y_2\ x_2\ y_3\ x_1\ y_4\ x_2\ y_5$$ 
for new and independent variables $y_1,...,y_5,x_1,x_2\in X$ and set
\begin{align*}
    L(y_1) &:= \{\varepsilon\} \cup \{\ u'0\#\#0^{1+j}\#0\ |\ u'=\varepsilon \text{ or } u'=0u'', |u''|_{\#^3}=0, u',u''\in\Sigma^*\ \}, \\
    L(y_2) &:= \{\varepsilon\} \cup \{0^6\#0\}, \\
    L(y_3) &:= \{\varepsilon\} \cup \{0^6\#\#0^{1+i}\#0^6\mathbf{00}\}, \\
    L(y_4) &:= \{\varepsilon\} \cup \{0\#0^4\}, \\
    L(y_5) &:= \{\varepsilon\} \cup \{\ 0^3\#\#0u'\ |\ u'=\varepsilon \text{ or } u'=u''0, |u''|_{\#^3} = 0, u',u''\in\Sigma^*\}, \\
    L(x_1) &:= \{00\}^*, \text{ and} \\
    L(x_2) &:= \{00\}^*.
\end{align*}
Then, if $h(u)$ contains a factor $$\#\#0^{j+1}\#0^{5}0^{2+2m_1}\#0^{5}0^{2+2m_2}\#\#0^{i+1}\#0^50^{2+2m_1}\mathbf{0^2}\#0^50^{2+2m_2}\#\#,$$
which corresponds to $(q_i,r_1,r_2)\notin\delta(q_j,c_1,c_2)$, this predicate can be used to find a $r_\beta$-valid substitution $h'\in H$
for which we have $h'(\upgamma) = 0u0 = w$. Notice that each counter starts with a value $0^70^{2m_i}$ for $i\in[2]$ and $m_i\in\N_0$
instead of $0^50^{2m_i}$ as we assume both counters not to be zero in this example (by $c_1 = c_2 = 1$).
Predicates for all other cases can be constructed analogously by either switching the position of additional $\mathbf{0}'s$ (marked with bold letters in the construction), or removing
one or both occurrences of either $x_1$ or $x_2$ (and reducing the number of 0's in the corresponding part by 2) if $c_1 = 0$ or $c_2 = 0$ respectively.

Whats left to make sure is that for all $r_\beta$-valid $h'\in H$ we have that $h'(\upgamma) = 0u0$ for $u\in\Sigma^*$ such that
$|u|_{\#^3} = 0$ and $u\notin\mathtt{ValC}(A)$, even if some variables in $\upgamma$ are substituted with the empty word.
First, by the way we defined $L(y_i)$ and $L(x_j)$ for $i\in\{1,...,5\}$ and $j\in\{1,2\}$, we cannot obtain words in which $\#^3$ occurs as a factor.
Second, notice that substitutions that only map $y_2$,$y_3$, or $y_4$ to nonempty words, directly result in words of bad structure due to their suffixes and prefixes.
If we only have $h'(y_1) \neq \varepsilon$ or $h'(y_5) \neq \varepsilon$, then either the suffix or the prefix respectively results in bad structure.
The only potentially problematic substitution is if either all variables except $y_1$, $y_2$, $y_5$, and potentially occurrences of $x_1$ and $x_2$, 
or all variables except $y_1$, $y_4$, $y_5$, and potentially occurrences of $x_1$ and $x_2$ are substituted by the empty word.
Then we get a structure which resembles only one configuration. But then, we notice that either the first or the second counter
always has an even number of $0's$. This is not a valid encoding of a configuration, i.e., it is a word of bad structure. 
Hence, we cannot obtain $h'(\upgamma) = 0u'0$ with $u'\in\mathtt{ValC}(A)$.

Using all predicates, given some $r_\alpha$-valid $h\in H$, we can conclude that $h(\alpha)\notin L_E(\beta,r_\beta)$
if and only if $h(\alpha) = v0u0v$ such that $u\in\mathtt{ValvC}(A)$.
This decides the problem of whether $A$ has some accepting computation, hence the erasing equivalence problem for
pattern languages with regular constraints is undecidable in the binary case.
For larger alphabets, we may always restrict the alphabets used in the languages of the variables to the binary case, which allows for an reduction from the
binary case to all larger alphabet sizes. 
This concludes the proof of Theorem \ref{theorem:pattern-regular-constraints-erasing-equiv-undecidable}.

\section{Further Discussion}
As the constructed patterns in the previous reduction are both terminal-free, we have
immediately covered the general and terminal-free case together, as the latter can
be easily reduced to the first. We mention the following fact which formalizes the first
statement.

\begin{corollary}
    Let $(\alpha,r_\alpha),(\beta,r_\beta)\in Pat_{\Sigma,\mathcal{C}_{Reg}}$ such that $\alpha,\beta\in X^*$, i.e. $\alpha$
    and $\beta$ are terminal-free patterns.
    In general, it is undecidable to decide whether $L_{E}(\alpha,r_\alpha) = L_{E}(\beta,r_\beta)$
    for all alphabets $\Sigma$ with $\Sigma \geq 2$.
\end{corollary}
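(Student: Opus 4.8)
The plan is to observe that \emph{no new reduction is needed}: the two patterns produced in the proof of Theorem~\ref{theorem:pattern-regular-constraints-erasing-equiv-undecidable} are already terminal-free, so that same reduction simultaneously establishes the restricted claim.

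First I would inspect the explicit construction. The pattern $\alpha = x_v\,\alpha_1\,x_v\,\tilde{y}$ is a concatenation of the variables $x_v,\alpha_1,\tilde{y}$ alone; the word $v = 0\#^30$ never occurs in $\alpha$ itself but enters only through the regular constraint $L_E(x_v) = \{\varepsilon, v\}$. Likewise $\beta = \hat{\beta}_1\cdots\hat{\beta}_\mu\,\tilde{z}$ is assembled from the blocks $\hat{\beta}_i = x_i\,\upgamma_i\,x_i$ with each $\upgamma_i\in X^*$ terminal-free by construction, whence $\beta\in X^*$ as well. Every terminal symbol appearing in the argument (the letters $0$, $\#$, and the encoding words) lives inside the languages assigned by $r_\alpha$ and $r_\beta$ to variables, never inside the pattern words. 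Consequently the map $A\mapsto\bigl((\alpha,r_\alpha),(\beta,r_\beta)\bigr)$ is a reduction from the emptiness of $\mathtt{ValC}(A)$ whose image already lies in the terminal-free fragment $\{(p,r_p)\in Pat_{\Sigma,\mathcal{C}_{Reg}} \mid p\in X^*\}$, which immediately yields undecidability in the stated restricted setting.

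For completeness I would record a generic fallback that would work even absent this feature of the construction: given any pattern with regular constraints, replace every occurrence of a terminal $\ta\in\Sigma$ by a fresh variable $x_\ta$ and set $L(x_\ta) := \{\ta\}$, exactly as in the proof of Corollary~\ref{corollary:pattern-regular-constraints-inclusion-undecidability}. Since $x_\ta$ can only be substituted by $\ta$, this rewriting leaves the generated erasing language unchanged, so $L_E(\alpha,r_\alpha)=L_E(\beta,r_\beta)$ holds before the rewriting if and only if it holds after it; this reduces the general equivalence problem to its terminal-free form.

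There is essentially no difficult step here. The hard part, such as it is, is purely the verification that the construction of Theorem~\ref{theorem:pattern-regular-constraints-erasing-equiv-undecidable} places no terminal letter inside the pattern words---so that the patterns lie in $X^*$ without modification---together with the routine check that the terminal-elimination rewriting preserves the erasing language of each pattern. Both are immediate from the definitions, so the corollary follows with no further work.
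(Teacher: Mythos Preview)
Your proposal is correct and matches the paper's own justification: the patterns $\alpha$ and $\beta$ built in the proof of Theorem~\ref{theorem:pattern-regular-constraints-erasing-equiv-undecidable} are already terminal-free, so the reduction there directly yields the corollary. Your fallback via replacing terminals by singleton-constrained variables is also noted in the paper (for the inclusion corollary) but is not needed here.
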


With that, we obtain undecidability for nearly all problems regarding pattern languages with regular constraints.
The only open case is the equivalence problem of non-erasing pattern languages with regular constraints.
Using regular constraints, the problem becomes at least as hard as deciding the equivalence of two given regular languages
witnessed by the following example.

\begin{example}
    Let $(\alpha,r_\alpha),(\beta,r_\beta)\in Pat_{\Sigma,\mathcal{C}_{Reg}}$
    such that $\alpha = x$ and $\beta = y$ for some $x,y\in X$.
    Then $L_{NE}(\alpha,r_\alpha) = L_{NE}(\beta,r_\beta)$ if and only if 
    $L(x)\setminus\{\varepsilon\} = L(y)\setminus\{\varepsilon\}$.
\end{example}

Despite the most prominent open problem for patterns being undecidable in the case of pattern languages with regular constraints,
we see that even this problem, which is trivially decidable for patterns without regular constraints, becomes much harder in this setting. 
We propose the following open
question to which we have no definite conjecture so far.
An overview of the current state of patterns with regular constraints can be found in Table \ref{tab:pat-reg-const-state}.

\begin{question}
    Given $(\alpha,r_\alpha),(\beta,r_\beta)\in Pat_{\Sigma,\mathcal{C}_{Reg}}$,
	is it generally decidable to answer whether $L_{NE}(\alpha,r_\alpha) = L_{NE}(\beta,r_\beta)$?
\end{question}

\begin{table}[H]
    \label{tab:pat-reg-const-state}
    \centering
    \setlength\tabcolsep{0.4cm}
    {\renewcommand{\arraystretch}{1.2}
        \begin{tabular}{ | l | c | c | }
            \hline
            \mbox{Problem} & General & Terminal-Free \\ 
            \hline \hline 
            E-Membership & NP-complete & NP-complete \\  
            \hline 
            E-Inclusion & Undecidable & Undecidable \\
            \hline 
            E-Equivalence & Undecidable & Undecidable \\
            \hline 
            NE-Membership & NP-complete & NP-complete \\
            \hline 
            NE-Inclusion &  Undecidable & Undecidable \\
            \hline 
            NE-Equivalence & Open & Open \\
            \hline 
        \end{tabular}
    }
    \vspace{3mm}
    \caption{Current state regarding pattern languages with regular constraints}
\end{table}

%
%
\newpage
\bibliographystyle{splncs04}
\bibliography{patlang-regconst}

\end{document}